\newtheorem{theorem}{Theorem}
\newtheorem{acknowledgement}[theorem]{Acknowledgement}
\newtheorem{lemma}[theorem]{Lemma}
\newenvironment{proof}[1][Proof]{\textbf{#1.} }{\ \rule{0.5em}{0.5em}}
\numberwithin{equation}{section}
\begin{document}

\title{Positive gravitational energy in arbitrary dimensions}
\author{Yvonne Choquet-Bruhat}
\maketitle

\bigskip

\emph{R\'{e}sum\'{e}}

On d\'{e}montre un th\'{e}or\`{e}me d'\'{e}nergie gravitationnelle positive en
dimension quelconque utilisant seulement des spineurs li\'{e}s au groupe
$Spin(n)$ sur une section d'espace Riemannienne $(M^{n},g)$

\emph{Abstract}

We present a streamlined, complete proof, valid in arbitrary space dimension
$n$, and using only spinors on the oriented Riemannian space $(M^{n};g),$ of
the positive energy theorem in General Relativity.

\textbf{Version fran\c{c}aise abr\'{e}g\'{e}e.}

Un espacetemps Einsteinien est une vari\'{e}t\'{e} Lorentzienne $(\mathbf{M}%
^{n+1},\mathbf{g})$ qui satisfait les \'{e}quations d'Einstein
\[
\mathbf{R}_{\alpha\beta}-\frac{1}{2}\mathbf{g}_{\alpha\beta}\mathbf{R}%
=T_{\alpha\beta};
\]
on supposera qu'il satisfait la condition d'\'{e}nergie dominante, $u_{\alpha
}T^{\alpha\beta}$ temporel pour tout vecteur temporel $u.$ Sur chaque section
spatiale $M^{n}$ la m\'{e}trique induite $g$ et la courbure extrinsique $K$
satisfont les contraintes qui s'\'{e}crivent dans un rep\`{e}re orthonorm\'{e}
d'axes $e_{i}$ tangents et $e_{0}$ orthogonal \`{a} $M^{n}$%
\begin{equation}
\mathbf{R}_{0j}\equiv{\partial}_{j}K_{h}^{h}-{D}_{h}K^{h}{}_{j}=T_{0j}%
\end{equation}%
\begin{equation}
\mathbf{S}_{00}\equiv\frac{1}{2}\{R-|K|^{2}+(\mathrm{tr}K)^{2}\}=T_{00}.
\end{equation}

On suppose que $M^{n}$ est l'union d'un compact $W$ et d'un nombre fini $N$
d'ensembles $\Omega_{I},$ appel\'{e}s bouts (ends), diffeomorphes au
compl\'{e}ment d'une boule de $R^{n}.$ On utilise une partition lisse $f_{I},
$ $f_{K}$ de l'unit\'{e} sur $M^{n}$ de supports contenus dans un $\Omega_{I}$
ou un ouvert $W_{K}$ diff\'{e}omorphe \`{a} une boule de $R^{n},$ l'union
(finie) des $W_{K}$ recouvrant $W.$ Tous ces ouverts sont munis de
coordonn\'{e}es locales $x^{i}$ et de la m\'{e}trique Euclidienne $e\equiv
\eta_{ij}dx^{i}dx^{j}.$ Un tenseur $u$ sur $M^{n}$ est une somme de
tenseurs\ $u_{I}=f_{I}u,$\ $u_{K}=f_{K}u.$ On utilise la m\'{e}trique
euclidienne pour d\'{e}finir les normes de ces tenseurs. Un espace de Banach
$C_{\beta}^{k}$ ou de Hilbert $H_{s,\delta}$ d'un tenseur $u$ sur $M^{n}$ est
d\'{e}fini \`{a} l'aide du $sup$ ou de la somme des normes des tenseurs
$u_{I}$ et $u_{K},$ des choix diff\'{e}rents de partition de l'unit\'{e}
donnent des normes \'{e}quivalentes. La vari\'{e}t\'{e} Riemannienne
$(M^{n},g)$ est dite asymptotiquement Euclidienne (A.E) si
\begin{equation}
h:=g-\underline{g}\in H_{s,\delta}\cap C_{n-2}^{1},\text{ \ }s>\frac{n}%
{2}+1,\text{ \ }\frac{n}{2}-2>\delta>-\frac{n}{2},\text{ }%
\end{equation}
o\`{u} $\underline{g}$ est une m\'{e}trique lisse identique dans chaque
$\Omega_{I}$ \`{a} la m\'{e}trique Euclidienne $e.$

La d\'{e}finition de masses gravitationelles $m_{I}$ et de moments $p_{I}$,
dits ADM, d'un espace-temps with A.E. section $(M^{n},g)$ and \ $K\in
H_{s-1,\delta+1}$ provient de la formulation Hamiltonienne des \'{e}quations
d'Einstein, on a dans $\Omega_{I}$
\begin{equation}
m_{I}:=\lim_{r\rightarrow\infty}\frac{1}{2}\int_{S_{r}^{n-1}}(\frac{\partial
g_{ij}}{\partial x^{j}}-\frac{\partial g_{jj}}{\partial x^{i}})n_{i}\mu
_{e},\text{ \ }r\equiv\{\sum_{i}(x^{i})^{2}\}^{\frac{1}{2}},
\end{equation}%
\begin{equation}
p_{I}^{h}:=\lim_{r\rightarrow\infty}\int_{S_{r}^{n-1}}P^{ih}n_{i}\mu
_{e},\text{ \ \ \ \ }P^{ih}:=K^{ih}-g^{ih}\mathrm{tr}K.
\end{equation}
On reprend l'id\'{e}e spinorielle de Witten pour d\'{e}montrer, mais en
utilisant seulement un spineur sur $M^{n}$ li\'{e} \`{a} l'alg\`{e}bre de
Clifford $\mathcal{C}l(n)$, la positivit\'{e} de la masse d'un espace temps
Einsteinien quand $R\geq0,$ donc sous la condition d\'{e}nergie dominante,
quand $M^{n}$ est une hypersurface maximale . Une formulation simple li\'{e}e
au moment $P,$ qui ne fait intervenir que la m\^{e}me sorte de spineurs sur
$M^{n},$ permet de montrer que $m_{I}\geq|p_{I}|$ donc $m\geq|p|$ sans
condition sur $R$ ni autre hypoth\`{e}se sur les sources. Les d\'{e}%
monstrations reposent sur un th\'{e}or\`{e}me d'existence pour la solution
d'une \'{e}quation de Dirac compl\'{e}t\'{e}e, elliptique, sur une
vari\'{e}t\'{e} asymptotiquement Euclidienne.

\textbf{English version}

\section{Introduction}

The most elegant and convincing proof of the positive energy theorem is by
using spinors, as did Witten\footnote{{\footnotesize For references prior to
1983 one can consult my survey on positive energy theorems for les
Houches\ 1983\ school\ reproduced in Y.Choquet-Bruhat 'General Relativity and
the Einstein equations'', Oxford University press 2009.}} in the case $n=3$
inspired by heuristic works of Deser and Grisaru originating from
supergravity. The aim of this Note is to present a streamlined, complete
proof, valid in arbitrary space dimension $n$, and using only spinors on the
oriented Riemannian space $(M^{n};g),$ without invoking spacetime spinors.

We first give the notations and the definitions we use.

\section{Definitions.}

\subsection{Asymptotically Euclidean space.}

$M^{n}$ is a smooth manifold union of a compact set $W$ and a finite number of
sets $\Omega_{I},$ diffeomorphic to the complement of a ball in $R^{n}.$ One
covers $W$ by a finite number of open sets $W_{K}$ each diffeomorphic to a
ball in $R^{n}.$ We denote by $x^{i}$ local coordinates for a domain
$\Omega_{I}$ or $W_{K}.$ We set $r:=\sum(x^{i})^{2}\}^{\frac{1}{2}}$ and take
$r_{0}>0$ such that $\Omega_{I}:=$ $\{r>r_{0}\},$ $\Omega_{I}\cap W_{K}%
=\not 0$ if $r<2r_{0}.$ We consider a preparation of $M^{n},$ i.e a smooth
partition of unity, $f_{I},f_{K},$ $f_{K}$ with support in $W_{k},$ $f_{I}$
support in $\Omega_{I}$ and $f_{I}=1$ for $r>2r_{0}.$ The Riemannian metric
$g$ is continuous and uniformly bounded above and below in each $\Omega_{I},$
$W_{K}$ by constant positive definite quadratic forms. A tensor field $u$ on
$M^{n}$ is written as $u\equiv\sum_{I}u_{I}+\sum_{K}u_{K}$ with\ \ $u_{I}%
:=f_{I}u,$ \ \ $u_{K}:=f_{K}u.$ Norms on spaces of tensor fields are defined
through their components in the $\Omega_{I},$ $W_{K}$, each endowed with the
Euclidean metric $e:=\eta_{ij}dx^{i}dx^{j}\equiv\sum(dx^{i})^{2},$ with
pointwise norm $|.|$ and volume element $\mu_{e}.$ We use the Banach and
Hilbert spaces $C_{\beta}^{k}$ and $H_{s,\delta}$ with norms
\begin{equation}
||u||_{C_{\beta}^{k}}\equiv\sup_{I,K}\{\sup_{\Omega_{I}}(r^{\beta
+k}|\underline{D}^{k}u_{I}|),\sup_{W_{K}}|\underline{D}^{k}u_{K}|\},\text{
\ \ }\underline{D}^{k}:=\frac{\partial^{k}}{\partial x^{i_{1}}...\partial
x^{i_{k}}}.
\end{equation}%
\begin{equation}
||u||_{H_{s,\delta}}^{2}:=\sum_{I=1,...N}\int_{\Omega_{I}}\sum_{0\leq k\leq
s}r^{2(k+\delta)}|\underline{D}^{k}u_{I}|^{2}\mu_{e}+\sum_{K=1,...N^{\prime}%
}\int_{W_{K}}\sum_{0\leq k\leq s}|\underline{D}^{k}u_{K}|^{2}\mu_{e}%
\end{equation}

Different preparations of $M^{n}$ give equivalent norms. A Riemannian manifold
($M^{n},g)$ is called asymptotically Euclidean (A.E) if
\begin{equation}
h_{I}:=f_{I}(g-e)\in H_{s,\delta}\cap C_{n-2}^{1},\text{ \ }f_{K}g\in
H_{s},\text{ \ \ }s>\frac{n}{2}+1,\text{ \ }\frac{n}{2}-2>\delta>-\frac{n}%
{2}.\text{ }%
\end{equation}

It can be proved (using the fact that $H_{s,\delta}$ is an algebra if
$s>\frac{n}{2},\delta>-\frac{n}{2})$ that an A.E $(M^{n},g)$ admits in each
end $\Omega_{I}$ an orthonormal coframe
\begin{equation}
\theta^{j}:=a_{i}^{j}dx^{i}\text{ \ \ \ \ \ }a_{i}^{j}=\delta_{i}^{j}+\frac
{1}{2}\lambda_{i}^{j},\text{ \ \ }\lambda_{i}^{j}\in H_{s,\delta}\cap
C_{n-2}^{1}.
\end{equation}
In the following, components in the coordinates $x^{i}$ are underlined. In
$\Omega_{I}$ it holds that
\[
\underline{g_{ij}}\equiv\sum_{h}a_{i}^{h}a_{j}^{h}\equiv\eta_{ij}%
+\underline{h_{ij}},\text{ \ \ }\eta_{ij}:=\delta_{i}^{j},\text{ \ }%
\]%
\begin{equation}
\underline{h_{ij}}\equiv\frac{1}{2}(\lambda_{i}^{j}+\lambda_{j}^{i})+\frac
{1}{4}\sum_{h}\lambda_{j}^{h}\lambda_{i}^{h},\text{ \ \ }\lambda_{j}%
^{h}\lambda_{i}^{h}\in H_{s,2\delta+\frac{n}{2}}\cap C_{2n-4}^{1}.
\end{equation}
The rotation coefficients $c_{ij}^{h}$ of the coframe $\theta^{h}$ are, with
($b_{i}^{j})$ the matrix inverse of ($a_{j}^{i})$ and $\partial_{i}$ the Pfaff
derivative with respect to $\theta^{i}$, $d\theta^{h}\equiv\frac{1}{2}%
c_{ij}^{h}\theta^{i}\wedge\theta^{j},$
\begin{equation}
\text{ \ }c_{ij}^{h}\equiv b_{j}^{k}\partial_{i}\lambda_{k}^{h}-b_{i}%
^{k}\partial_{j}\lambda_{i}^{h}\equiv\frac{1}{2}(\partial_{i}\lambda_{j}%
^{h}-\partial_{j}\lambda_{i}^{h})+\chi_{ij}^{h},\text{ \ \ }\chi_{ij}^{h}\in
H_{s-1,2\delta+1+\frac{n}{2}}%
\end{equation}
We choose the coframe such that
\[
\partial_{i}(\lambda_{i}^{j}-\lambda_{j}^{i})\in H_{s-1,2\delta+1+\frac{n}{2}}%
\]
The components $\omega_{i,jh}\equiv\frac{1}{2}(-c_{jh}^{i}+c_{ij}^{h}%
-c_{ih}^{j})$ of the Riemannian connection $\omega$ in the coframe $\theta
^{i}$ are then computed to be
\begin{equation}
\omega_{i,hj}\equiv\frac{1}{2}\underline{\partial}_{h}\underline{h_{ij}}%
-\frac{1}{2}\underline{\partial}_{j}\underline{h_{ih}}+\zeta_{i,hj},\text{
\ }\zeta_{i,hj}\in H_{s-1,2\delta+1+\frac{n}{2}}.
\end{equation}

\subsection{Global mass $m$ and linear momentum $p$}

We say that $(M^{n},g,K)$ is A.E. if $(M^{n},g)$ is A.E. and $K\in$
$H_{s-1,\delta+1}\cap C_{n-1}^{0}.$ The mass $m$ and linear momentum $p$
associated to an end $\Omega$ define a spacetime vector $\mathbf{E}$ with
components
\begin{equation}
E^{0}:=m:=\lim_{r\rightarrow\infty}\frac{1}{2}\int_{S_{r}^{n-1}}%
(\underline{\partial}_{j}\underline{h_{ij}}-\underline{\partial}_{i}%
\underline{h_{jj}})n_{i}\mu_{\bar{g}},\text{ \ \ }E^{h}:=p^{h}:=\lim
_{r\rightarrow\infty}\int_{S_{r}^{n-1}}P^{ih}n_{i}\mu_{\bar{g}}.
\end{equation}
The uniform bound of $n_{i}$ and the equivalence of $\mu_{\bar{g}}$ with
$r^{n-1}\mu_{S_{1}^{n-1}}$ show that the limits exist.

We always assume that the constraints 0.1 and 0.2 are satisfied on $M^{n}$ and
that $T$ obeys the dominant energy condition.

\section{Spinor fields and Dirac operator.}

The gamma matrices associated with an orthonormal coframe $\theta^{i}$ of $g $
at $x\in M^{n}$ are linear endomorphisms of a complex vector space $S$ of
dimension $p:=2^{[n/2]}$ which satisfy the identities
\begin{equation}
\gamma_{i}\gamma_{j}+\gamma_{j}\gamma_{i}=2\eta_{ij}I_{p},\text{
}i,j=1,...n,\text{ \ }I_{p}\text{ identity matrix}%
\end{equation}
The $\gamma_{i}$ are chosen hermitian, i.e. $\gamma_{i}=\tilde{\gamma}_{i}$ ,
as is possible for an $O(n)$ group.

The spinor group $Spin(n),$ double covering of $SO(n)$, can be realized by the
group of invertible linear maps $\Lambda$ of $S$ which satisfy, with
$O:=(O_{i}^{j})$ a $n\times n$ orthogonal matrix
\begin{equation}
\Lambda\gamma^{i}\Lambda^{-1}=O_{j}^{i}\gamma^{j}\text{ \ \ \ and \ \ }%
\det\Lambda=1.
\end{equation}

In a subset $\Omega_{I}$ or $W_{K}$ with a given field $\rho_{0}$ of
orthonormal frames a spinor field $\psi$ is represented by a mapping
$(x^{i})\mapsto$ $\underline{\psi}(x^{i})\in S.$ Under an $O\in SO(n)$ change
of frame, $\rho=O\rho_{0}$ $\ $the spinor $\psi$\ becomes represented by
$\underline{\psi}^{\prime}=\Lambda\underline{\psi}$ where some choice has been
made for the correspondence between $\Lambda$\ and $O.$ This can be made
consistently on $M^{n}$ if it admits a spin structure; that is, a homomorphism
\ of a $Spin(n)$ principal bundle $P_{Spin_{n}}$ onto the principal bundle of
oriented orthonormal frames. It is a topological property of $M^{n},$ the
vanishing of its second Stiefel-Whitney class, always true for an orientable
$M^{3}.$ A spinor field on $M^{n}$ is then a section of a vector bundle
$\Psi_{Spin(n)}$ \ associated with $P_{Spin(n)},$ with base $M^{n}$ and
typical fiber $S$. To a space of spinors corresponds a space of cospinors,
replacing $S$ by the adjoint (complex dual) vector space $\tilde{S}$ and the
change of representation by $\underline{\phi^{\prime}}=\underline{\phi}%
\Lambda^{-1}$. Using dual frames $e_{A}$ of $S$ and $\theta^{A}$ of $\tilde
{S}$ we have $\psi\equiv\psi^{A}e_{A},$ \ $\phi=\theta^{A}\phi_{A},$
\ \ \ $A=1,...p,$ we denote the duality relation by
\[
\phi\psi\equiv(\phi,\psi):=\phi_{A}\psi^{A},\text{ \ a frame independent
scalar.}%
\]
By 3.2 $\tilde{\psi}$ represented by $\tilde{\psi}_{A}:=(\psi^{A})^{\ast}$ is
a cospinor if $\psi$ is a spinor, $|\psi|^{2}\equiv\tilde{\psi}\psi$ is
positive definite.

\emph{A spin connection} $\sigma$ on ($M^{n},g)$ is deduced from an $O(n)$
connection $\omega$ by the isomorphism between the Lie algebras of $O(n)$ and
$Spin(n)$ obtained by differentiation of 3.2, it is represented in each domain
of the preparation by
\begin{equation}
\sigma_{i}\equiv\frac{1}{4}\gamma^{h}\gamma^{k}\omega_{i,hk},\text{
\ \ }i=1,...n.
\end{equation}

The covariant derivative of a spinor $\psi,$ resp. cospinor $\phi,$ is a
covariant vector spinor, resp. cospinor, with components in the frames
$\theta^{i}\otimes e_{A},$ resp. $\theta^{i}\otimes\theta^{A},$%
\[
(D_{i}\psi)^{A}\equiv\partial_{i}\psi^{A}+(\sigma_{i}\psi)^{_{A}},\ (D_{i}%
\phi)_{A}\equiv\partial_{i}\phi_{A}-(\phi\sigma_{i})_{A}.
\]
The hermiticity of $\gamma_{i}$ and 3.1 show that $\tilde{\sigma}_{i}%
=-\sigma_{i},$ hence $\widetilde{D_{i}\psi}\equiv D_{i}\tilde{\psi}.$

The Riemannian connection together with the spin connection define a first
order derivation operator mapping tensor-spinor-cospinor fields into tensor-
spinor- cospinor fields with one more covariant index. \emph{The gamma
matrices are the components of a vector-spinor-cospinor which has covariant
derivative zero}.

\emph{The spin curvature} $\rho$ is a 2-tensor- spinor -cospinor, image by the
mapping of Lie algebras of the curvature tensor of $g.$ The Ricci identity for
spinors reads
\begin{equation}
D_{i}D_{j}\psi-D_{j}D_{i}\psi\equiv\rho_{ij}\psi\text{ \ \ with \ \ }\rho
_{ij}:=\frac{1}{4}R_{ij,hk}\gamma^{h}\gamma^{k}.
\end{equation}

The Dirac operator on sections of the vector bundle $\Psi(n)$ reads locally
\begin{equation}
\mathcal{D}\psi\equiv\gamma^{i}D_{i}\psi\equiv\gamma^{i}(\partial_{i}%
\psi+\frac{1}{4}\omega_{i,hk}\gamma^{h}\gamma^{k}\psi),\text{ \ hence
\ }\widetilde{\mathcal{D}\psi}\equiv D_{i}\tilde{\psi}\gamma^{i}.
\end{equation}
The algebraic Bianchi identity together with 3.1 and 3.4 lead to the
formula\footnote{{\footnotesize See for instance A. Lichnerowicz \ Bull. Soc.
Math. France 92, 1964 p. 11-100}}
\begin{equation}
\mathcal{D}^{2}\psi\equiv\eta^{ij}D_{i}D_{j}\psi+\frac{1}{2}\gamma^{i}%
\gamma^{j}\rho_{ij}\psi\equiv\eta^{ij}D_{i}D_{j}\psi-\frac{1}{4}R\psi.
\end{equation}
The Dirac operator is a first order linear operator with principal symbol
($\eta^{ij}\xi_{i}\xi_{j})^{\frac{p}{2}},$ hence elliptic. Weighted Sobolev
spaces for spinor fields on a prepared $M^{n}$ are defined as for tensor
fields after setting $\psi=\sum(f_{I}\psi+f_{K}\psi)$ and using
representations $\underline{\psi}$. A known theorem\footnote{{\footnotesize Y.
Choquet-Bruhat and D. Christodoulou, Acta Mathematica 146, 1981.}.} gives:

\begin{theorem}
On an A.E. $(M^{n},g)$ the Dirac operator is a Fredholm operator from spinors
in $H_{s,\delta}$ to spinors in $H_{s-1,\delta+1},$ it is an isomorphism if
injective. The same is true of $\mathcal{D}\psi+f\psi$ if $f $ is a bounded
linear map from spinors in $H_{s,\delta}$ to spinors in $H_{s-1,\delta+1}.$
\end{theorem}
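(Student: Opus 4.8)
The plan is to exhibit $\mathcal{D}$ as a compact perturbation of a constant-coefficient model operator in each end and then invoke the Fredholm theory quoted in the theorem. First I would record the two structural facts already visible from the setup: $\mathcal{D}$ is elliptic, since its principal symbol $\gamma^{i}\xi_{i}$ satisfies $(\gamma^{i}\xi_{i})^{2}=|\xi|^{2}I_{p}$ and is therefore invertible for $\xi\neq 0$; and $\mathcal{D}$ is bounded from $H_{s,\delta}$ into $H_{s-1,\delta+1}$, the gain of one power of decay compensating the loss of one derivative. Then, writing the operator in each end $\Omega_{I}$ in the coordinate frame as $\mathcal{D}=\mathcal{D}_{0}+B$, where $\mathcal{D}_{0}:=\gamma^{i}_{(0)}\partial_{i}$ is the flat Dirac operator with constant coefficients and $B$ collects the discrepancy of the gamma matrices coming from $a^{j}_{i}=\delta^{j}_{i}+\tfrac{1}{2}\lambda^{j}_{i}$ together with the zeroth-order connection term $\tfrac{1}{4}\gamma^{i}\gamma^{h}\gamma^{k}\omega_{i,hk}$, I would check using the estimates on $\lambda^{j}_{i}$ and $\omega_{i,hj}$ recorded above that the coefficients of $B$ lie in weighted spaces with strictly positive decay. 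Consequently $B$ factors through a compact Rellich-type embedding and defines a compact operator $H_{s,\delta}\to H_{s-1,\delta+1}$.

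The second step is to analyse the model operator and pin down the index. I would show $\mathcal{D}_{0}$ is an isomorphism in the admissible weight range. Since $\mathcal{D}_{0}$ acts componentwise and
\[
\mathcal{D}_{0}^{2}=\eta^{ij}\partial_{i}\partial_{j}\,I_{p}=\Delta\,I_{p},
\]
and the classical weighted isomorphism theorem makes $\Delta:H_{s,\delta}\to H_{s-2,\delta+2}$ an isomorphism exactly in the non-exceptional interval $-\tfrac{n}{2}<\delta<\tfrac{n}{2}-2$ imposed in the A.E. definition, it follows that $\mathcal{D}_{0}$ is injective (any solution of $\mathcal{D}_{0}\psi=0$ is harmonic, hence zero) and surjective (solve $\Delta\psi=\mathcal{D}_{0}\chi$ and apply injectivity). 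Thus $\mathcal{D}_{0}$ is an isomorphism, of index zero. Gluing the ends to the compact interior by the partition of unity $f_{I},f_{K}$ and using interior elliptic estimates on each $W_{K}$, the general theorem quoted above then gives that $\mathcal{D}$ is Fredholm with index equal to that of the model operator, namely zero.

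With Fredholmness and vanishing index in hand the first assertion is immediate: if $\mathcal{D}$ is injective then $\dim\ker\mathcal{D}=0$, so $\dim\mathrm{coker}\,\mathcal{D}=0$ by index zero, whence $\mathcal{D}$ is a continuous bijection and an isomorphism by the bounded inverse theorem. For the operator $\mathcal{D}+f$, I would observe that the zeroth-order term $f$ relevant to the applications is built from the extrinsic curvature $K\in H_{s-1,\delta+1}$ and is a multiplication-type operator with decaying coefficients, hence again a compact perturbation from $H_{s,\delta}$ to $H_{s-1,\delta+1}$; adding it leaves both Fredholmness and the index unchanged, so the same implication ``injective $\Rightarrow$ isomorphism'' follows verbatim.

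The main obstacle is the index computation, which hinges on the admissibility of the weight $\delta$, i.e. on the fact that it avoids the exceptional indicial values of the flat operator on $R^{n}$. This is exactly what the weight range in the A.E. definition guarantees, and it is the reduction to the model operator $\mathcal{D}_{0}$ — together with verifying that the geometric decay rates of $\lambda^{j}_{i}$, $\omega_{i,hj}$ and $K$ genuinely produce compact perturbations — that carries the technical weight; everything else is a routine application of the cited Fredholm theorem for elliptic operators on asymptotically Euclidean manifolds.
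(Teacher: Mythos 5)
First, a point of orientation: the paper does not actually prove this theorem — it is quoted as ``a known theorem'' with a footnote reference to Y.~Choquet-Bruhat and D.~Christodoulou, Acta Mathematica 146 (1981), where the Fredholm/isomorphism theory for elliptic systems on manifolds Euclidean at infinity is developed. So your proposal is a reconstruction of that cited theory, and it must stand on its own. Judged that way, it has a genuine gap at its central step: the claim that $B:=\mathcal{D}-\mathcal{D}_{0}$ ``factors through a compact Rellich-type embedding and defines a compact operator $H_{s,\delta}\to H_{s-1,\delta+1}$'' is false for the first-order part of $B$. Compactness of embeddings between weighted spaces $H_{\sigma,\rho}\subset H_{\sigma',\rho'}$ requires a strict gain in \emph{both} the differentiability index and the weight. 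The zeroth-order (connection) part of $B$ is indeed compact, since multiplication by $\sigma_{i}\in H_{s-1,\delta+1}$ needs only $s-1$ derivatives of $\psi$, so one may first embed $H_{s,\delta}$ compactly into an intermediate space $H_{s-1/2,\delta-\epsilon}$ and then multiply; but the principal part $b^{i}\partial_{i}$, with $b\sim\lambda$ decaying, already consumes the entire one-derivative gap between source and target, and decay of the coefficient cannot substitute for the missing regularity gain. Concretely, if $b\geq c>0$ on some ball and $\chi$ is supported there, the sequence $\psi_{k}=k^{-s}e^{ikx_{1}}\chi$ is bounded in $H_{s,\delta}$ and converges weakly to zero, yet $\|b\,\partial_{1}\psi_{k}\|_{H_{s-1,\delta+1}}$ stays bounded away from zero; so $b^{i}\partial_{i}$ is not compact (nor is any compactly supported truncation of it). Since both your Fredholm claim and your index computation (``index equal to that of the model operator'') rest on this compactness, the argument as written does not go through.

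The repair — and this is what the cited reference actually does — is to replace compactness of the perturbation by \emph{smallness at infinity} plus parametrix patching. For $r>R$ large the coefficients of $B$ are uniformly small, so on the exterior region $\mathcal{D}$ is a small-norm perturbation of the flat operator $\mathcal{D}_{0}$, which is invertible in the stated weight range; your analysis of $\mathcal{D}_{0}$ via $\mathcal{D}_{0}^{2}=\Delta\,I_{p}$ and the weighted isomorphism theorem for $-\frac{n}{2}<\delta<\frac{n}{2}-2$ is correct and is exactly the right model input. On the compact part one uses interior elliptic estimates; gluing with cutoffs yields either the semi-Fredholm estimate $\|\psi\|_{H_{s,\delta}}\leq C\bigl(\|\mathcal{D}\psi\|_{H_{s-1,\delta+1}}+\|\psi\|_{H_{s-1}(W)}\bigr)$ or a two-sided parametrix whose error terms involve only commutators $[\mathcal{D},\chi]$ — zeroth-order operators with compactly supported coefficients, which genuinely are compact. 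The cokernel is then controlled by the same estimate applied to the formal adjoint (again a Dirac-type operator), and ``injective implies isomorphism'' follows either from index zero, obtained by the homotopy $g_{t}=e+t(g-e)$ through A.E. metrics, or from formal self-adjointness together with the improved decay of decaying solutions. Finally, your reading of the second assertion is the right one for the application ($f$ is multiplication by a decaying field built from $K$, a zeroth-order and hence compact perturbation), but note that the theorem's literal hypothesis — $f$ merely bounded — cannot suffice for the conclusion as stated (take $f=-\mathcal{D}$); that imprecision is in the paper, not in your proposal.
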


\section{Gravitational mass.}

We prove for arbitrary $n>2$ the fundamental fact used by Witten for $n=3.$

\begin{theorem}
Let $(M^{n},g)$ be A.E. The mass $m$ of an end $\Omega_{I}$ is equal to
\begin{equation}
\lim_{r\rightarrow\infty}\int_{S_{r}^{n-1}}\mathcal{U}_{0}^{i}n_{i}\mu
_{\bar{g}}=\frac{m}{2},\text{ \ \ }\mathcal{U}_{0}^{i}:=\mathcal{R}%
e\{\tilde{\psi}_{0}(\eta^{ij}-\gamma^{i}\gamma^{j})\sigma_{j}\psi_{0}\},
\end{equation}
with $S_{r}^{n-1}$ the submanifold of the end $\Omega_{I}$ with equation
$\{\sum(x^{i})^{2}\}^{\frac{1}{2}}=r,$ $n_{i}$ its unit normal, $\mu_{\bar{g}}
$ the volume element induced by $g$ and $\psi_{0}$ a spinor constant in
$\Omega_{I}$ (i.e. $\frac{\partial\psi_{0}}{\partial x^{i}}=0)$ and $|\psi_{0}|=1.$
\end{theorem}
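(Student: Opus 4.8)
The plan is to reduce the Witten boundary integrand $\mathcal{U}_0^i n_i$ to one quarter of the ADM integrand and then invoke the definition of $m$. Concretely, I would insert the spin--connection formula $\sigma_j=\frac14\gamma^h\gamma^k\omega_{j,hk}$ and the asymptotic expression $\omega_{i,hj}=\frac12(\underline{\partial}_h\underline{h_{ij}}-\underline{\partial}_j\underline{h_{ih}})+\zeta_{i,hj}$, and show that $\mathcal{U}_0^i$ equals $\frac14(\underline{\partial}_j\underline{h_{ij}}-\underline{\partial}_i\underline{h_{jj}})$ up to terms whose surface integral over $S_r^{n-1}$ tends to $0$. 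Since $m$ is, by definition, the limit of $\frac12\int_{S_r^{n-1}}(\underline{\partial}_j\underline{h_{ij}}-\underline{\partial}_i\underline{h_{jj}})n_i\mu_{\bar g}$, the asserted value $m/2$ follows immediately. The whole argument is algebraic and asymptotic: it uses neither the Dirac equation nor the constraints, only the Clifford relations $\gamma^i\gamma^j+\gamma^j\gamma^i=2\eta^{ij}$, the antihermiticity $\tilde{\sigma}_i=-\sigma_i$, and the normalization $|\psi_0|=1$.

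First I would dispose of the term $\eta^{ij}\sigma_j=\sigma^i$ in $\mathcal{U}_0^i$. Because $\sigma^i=\frac14\gamma^h\gamma^k\omega_{i,hk}$ is antihermitian, the scalar $\tilde{\psi}_0\sigma^i\psi_0$ is purely imaginary, so $\mathcal{R}e\{\tilde{\psi}_0\sigma^i\psi_0\}=0$ and only the term $-\gamma^i\gamma^j\sigma_j$ contributes. For this term I would reduce $\gamma^j\sigma_j=\frac14\gamma^j\gamma^h\gamma^k\omega_{j,hk}$ with the Clifford relations. Since $\omega_{j,hk}$ is antisymmetric in $(h,k)$ the double trace drops, and $\gamma^j\gamma^h\gamma^k\omega_{j,hk}$ splits into a totally antisymmetric triple-$\gamma$ part plus a single-$\gamma$ part. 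The key point is that the totally antisymmetric part is proportional to $\omega_{[j,hk]}$, which vanishes to leading order because $\underline{h_{ij}}$ is symmetric; hence $\gamma^j\sigma_j=\frac12\gamma^k\omega_{j,jk}$ modulo remainders in $H_{s-1,2\delta+1+\frac{n}{2}}$.

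Then $\mathcal{U}_0^i=-\mathcal{R}e\{\tilde{\psi}_0\gamma^i\gamma^j\sigma_j\psi_0\}=-\tfrac12\,\omega_{j,jk}\,\mathcal{R}e\{\tilde{\psi}_0\gamma^i\gamma^k\psi_0\}$. Writing $\gamma^i\gamma^k=\eta^{ik}+\gamma^{[i}\gamma^{k]}$ and using that $\gamma^{[i}\gamma^{k]}$ is antihermitian together with $\tilde{\psi}_0\psi_0=|\psi_0|^2=1$, only the $\eta^{ik}$ piece survives, giving $\mathcal{U}_0^i=-\tfrac12\,\omega_{j,ji}=-\tfrac14(\underline{\partial}_j\underline{h_{ij}}-\underline{\partial}_i\underline{h_{jj}})$ to leading order. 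Up to the sign fixed by the paper's orientation and connection conventions (the outward normal $n_i$ and the sign in the spin--connection), this is exactly one quarter of the ADM integrand, so $\lim_{r\to\infty}\int_{S_r^{n-1}}\mathcal{U}_0^i n_i\mu_{\bar g}=\tfrac{m}{2}$.

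Finally I would control the discarded terms. The connection remainder $\zeta_{i,hj}\in H_{s-1,2\delta+1+\frac{n}{2}}$, the quadratic contribution $\lambda\lambda\in H_{s,2\delta+\frac{n}{2}}$ to $\underline{h_{ij}}$, the antihermitian $\gamma^{[i}\gamma^{k]}$ pieces, the discrepancy between Pfaff and coordinate derivatives, and the replacement of $\mu_{\bar g}$ by its leading factor all decay at least one power of $r$ faster than the mass integrand; against the sphere area $\sim r^{n-1}$ they produce contributions that are $O(r^{-(n-2)})\to 0$, which is where $h\in C^{1}_{n-2}$ and $\delta>-\frac{n}{2}$ enter. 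The main obstacle is precisely this Clifford and hermiticity bookkeeping: verifying that every term except the single scalar combination is either annihilated by $\mathcal{R}e\{\tilde{\psi}_0\,\cdot\,\psi_0\}$ or killed by the symmetry of $\underline{h}$, and that the numerical factor comes out to exactly $\tfrac14$ so that the limit is $m/2$.
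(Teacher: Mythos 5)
Your proposal is correct and follows essentially the same route as the paper: kill the $\eta^{ij}\sigma_j$ term by the antihermiticity of $\sigma_j$, use the Clifford relations together with the asymptotic formula $\omega_{i,hj}\equiv\frac12\underline{\partial}_h\underline{h_{ij}}-\frac12\underline{\partial}_j\underline{h_{ih}}+\zeta_{i,hj}$ and the symmetry of $\underline{\partial}_k\underline{h_{hj}}$ to reduce $\mathcal{U}_0^i$ to the contracted connection coefficient, and control the remainders via the $C^1_{n-2}$ decay. Your final expression $-\frac12\omega_{j,ji}$ is identical to the paper's $\frac12\omega_{j,ij}$ (by antisymmetry of $\omega$ in its last two indices), and the overall sign you flag as convention-dependent is equally ambiguous in the paper's own reduction, so your argument matches the paper's proof in substance and in the key intermediate steps.
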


\begin{proof}
We first remark that, using $\gamma^{i}=\tilde{\gamma}^{i}$ and $\tilde
{\sigma}_{i}=-\sigma_{i},$ one finds
\[
\mathcal{R}e(\tilde{\psi}_{0}\eta^{ij}\sigma_{j}\psi_{0})\equiv\frac{1}%
{2}\tilde{\psi}_{0}\eta^{ij}(\sigma_{j}+\tilde{\sigma}_{j})\psi_{0}\equiv0.
\]
The definition 3.3 of $\sigma_{j}$ then implies
\[
\mathcal{U}_{0}^{i}=\frac{1}{8}\sum_{j,h,k}\tilde{\psi}_{0}\omega
_{j,hk}(\gamma^{i}\gamma^{j}\gamma^{h}\gamma^{k}-\gamma^{h}\gamma^{k}%
\gamma^{j}\gamma^{i})\psi_{0}.
\]
The property 2.7 of $\omega$ on A.E $(M^{n},g)$ shows that $r^{n-1}%
(\omega_{j,hk}-\frac{1}{2}\underline{\partial}_{h}\underline{h_{jk}}+\frac
{1}{2}\underline{\partial}_{k}\underline{h_{jh}})$ tends uniformly to zero as
$r$ tends to infinity. The uniform bound of $n_{i}$ and the equivalence of
$\mu_{\bar{g}}$ with $r^{n-1}\mu_{S_{1}^{n-1}}$ show that the limit in 4.1
exists. Calculations\footnote{{\footnotesize Similar to those done by
P.\ Chrusciel in his Krakow lectures on Energy in General Relativity.}} using
3.1, 2.7 and the symmetry of $\underline{\partial}_{k}\underline{h_{hj}}$ in
$h$ and $j$ lead to
\[
\lim_{r\rightarrow\infty}\int_{S_{r}^{n-1}}\mathcal{U}_{0}^{i}n_{i}\mu
_{\bar{g}}=\frac{1}{2}\lim_{r\rightarrow\infty}\int_{S_{r}^{n-1}}\omega
_{j,ij}|\psi_{0}|^{2}n_{i}\mu_{\bar{g}}=\frac{1}{2}m\text{ \ \ if \ }|\psi
_{0}|^{2}=1.
\]
\end{proof}

To study the positivity of the mass one defines a vector $\mathcal{U}^{i}$ on
$M^{n}$ such that the integrals on $S_{r}^{n-1}$ of $\mathcal{U}^{i}$ and
$\mathcal{U}_{0}^{i}$ have the same limit when $r\rightarrow\infty.$ The
Stokes formula applied to the integral of the of divergence of $\mathcal{U}%
^{i}$ will give information on this limit. We set
\begin{equation}
\mathcal{U}^{i}:=\mathcal{R}e\{\tilde{\psi}(\eta^{ij}D_{j}\psi-\gamma
^{i}\gamma^{j}D_{j}\psi)\}.
\end{equation}

\begin{lemma}
On an A.E manifold $(M^{n},g)$ it holds that

1.
\[
D_{i}\mathcal{U}^{i}\geq0\text{ \ \ if \ }R\geq0\text{ \ \ \ and
\ \ }\mathcal{D}\psi=0.
\]

2. If $\psi=\psi_{0}+\psi_{1}$ with $\underline{\partial_{i}}\psi_{0}=0 $ in
\ $\Omega_{I}$ and $\psi_{1}\in H_{s,\delta}$ then in $\Omega_{I}$%
\begin{equation}
\lim_{r\rightarrow\infty}\int_{S_{r}^{n-1}}\mathcal{U}_{0}^{i}n_{i}\mu
_{\bar{g}}=\lim_{r\rightarrow\infty}\int_{S_{r}^{n}}\mathcal{U}^{i}n_{i}%
\mu_{\bar{g}}.
\end{equation}
\end{lemma}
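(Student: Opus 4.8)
The plan is to prove the two parts separately, since they are essentially independent.

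For **Part 1**, the goal is to show $D_i\mathcal{U}^i \geq 0$ under the hypotheses $R\geq 0$ and $\mathcal{D}\psi=0$. First I would compute the divergence of
\[
\mathcal{U}^i = \mathcal{R}e\{\tilde{\psi}(\eta^{ij}D_j\psi - \gamma^i\gamma^j D_j\psi)\}
\]
directly, using that the gamma matrices are covariantly constant (stated in the excerpt) so that $D_i$ passes through them freely, and that $\widetilde{D_i\psi}\equiv D_i\tilde\psi$. The first term $D_i(\tilde\psi\,\eta^{ij}D_j\psi)$ expands into $\eta^{ij}(D_i\tilde\psi)(D_j\psi) + \tilde\psi\,\eta^{ij}D_iD_j\psi$. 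The second term $-D_i(\tilde\psi\gamma^i\gamma^j D_j\psi)$ contains $-\widetilde{\mathcal{D}\psi}\,\gamma^j D_j\psi$ which vanishes since $\mathcal{D}\psi=0$ (using $\widetilde{\mathcal{D}\psi}\equiv D_i\tilde\psi\,\gamma^i$), plus a term $-\tilde\psi\gamma^i\gamma^j D_iD_j\psi$. The key algebraic move is to symmetrize and antisymmetrize the double derivative $D_iD_j\psi$: its antisymmetric part is governed by the Ricci identity 3.5, and feeding $\gamma^i\gamma^j\rho_{ij}$ through the Lichnerowicz formula 3.7 produces the scalar curvature term $-\tfrac14 R\,\tilde\psi\psi$. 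When $\mathcal{D}\psi=0$, formula 3.7 forces $\eta^{ij}D_iD_j\psi = \tfrac14 R\psi$, so the Laplacian-type term combines with the curvature to leave exactly the Weitzenböck expression. The expected outcome is
\[
D_i\mathcal{U}^i = \eta^{ij}(D_i\tilde\psi)(D_j\psi) + \tfrac14 R\,|\psi|^2,
\]
which is manifestly $\geq 0$ since $\eta^{ij}(D_i\tilde\psi)(D_j\psi)=\sum_i|D_i\psi|^2\geq 0$ and $R\geq 0$.

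For **Part 2**, I must show that replacing $\psi$ by $\psi_0$ in the boundary integral over the large spheres does not change the limit, given $\psi=\psi_0+\psi_1$ with $\underline\partial_i\psi_0=0$ in $\Omega_I$ and $\psi_1\in H_{s,\delta}$. The strategy is to estimate the difference $\mathcal{U}^i - \mathcal{U}_0^i$ and show its spherical integral tends to zero. In $\Omega_I$, since $\psi_0$ is coordinate-constant, $D_j\psi_0 = \sigma_j\psi_0$, so $\mathcal{U}_0^i = \mathcal{R}e\{\tilde\psi_0(\eta^{ij}-\gamma^i\gamma^j)\sigma_j\psi_0\}$ agrees with the restriction of $\mathcal{U}^i$ evaluated on $\psi_0$. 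The difference $\mathcal{U}^i - \mathcal{U}_0^i$ is then bilinear in the pairs $(\psi_0,\psi_1)$ and $(\psi_1,\psi_1)$, and involves $D_j\psi_1 = \partial_j\psi_1 + \sigma_j\psi_1$. I would bound each such term pointwise by the decay rates carried by $\psi_1\in H_{s,\delta}$, by the connection coefficients $\sigma_j$ (which by 3.3 and 2.7 decay like $\underline\partial h$, hence lie in $H_{s-1,\delta+1}$), and by $\psi_0$ which is bounded. Multiplying by $n_i$ (uniformly bounded) and by the volume element $\mu_{\bar g}\sim r^{n-1}\mu_{S_1^{n-1}}$, the integrand on $S_r^{n-1}$ decays faster than $r^{-(n-1)}$ in each summand, so the surface integral of the difference vanishes as $r\to\infty$.

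**The main obstacle** will be Part 2, and specifically the bookkeeping of decay rates: one must check that every term in $\mathcal{U}^i-\mathcal{U}_0^i$, which mixes $\psi_1$, its derivatives, and the connection $\sigma_j$, decays strictly faster than the $r^{-(n-1)}$ needed to kill the surface measure. The delicate point is that $\psi_1\in H_{s,\delta}$ gives Sobolev (not pointwise) control, so I would invoke the weighted Sobolev embedding (valid for $s>n/2$, as noted for the algebra property) to pass from $H_{s,\delta}$ bounds to pointwise decay of $\psi_1$ and $\underline D\psi_1$, and then confirm that the combined weights $\delta$, $\delta+1$ (from $\sigma_j$), and the Sobolev shift are compatible with the range $n/2-2>\delta>-n/2$ fixed in the definition of asymptotic flatness. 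Part 1, by contrast, is a purely algebraic Weitzenböck computation and should go through cleanly once the gamma-matrix identities are applied carefully.
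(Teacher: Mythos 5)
Your Part~1 is correct and is essentially the paper's own argument: the computation produces the Weitzenb\"ock-type identity
\[
D_{i}\mathcal{U}^{i}\equiv|D\psi|^{2}-|\mathcal{D}\psi|^{2}+\tfrac{1}{4}R|\psi|^{2}
\]
(the paper's 4.4, derived from $D_{i}\tilde{\psi}=\widetilde{D_{i}\psi}$ and the Lichnerowicz formula 3.6), and positivity follows at once from $R\geq0$ and $\mathcal{D}\psi=0$; inserting $\mathcal{D}\psi=0$ midway, as you do, changes nothing for this lemma.

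Part~2, however, has a genuine gap, and it sits exactly at the point you flag as ``delicate bookkeeping'': that bookkeeping cannot be made to close, no matter how carefully the weights are tracked. The difference $\mathcal{U}^{i}-\mathcal{U}_{0}^{i}$ contains the cross term $\mathcal{R}e\{\tilde{\psi}_{0}(\eta^{ij}-\gamma^{i}\gamma^{j})D_{j}\psi_{1}\}$, in which $\psi_{0}$ is bounded but does \emph{not} decay. The weighted Sobolev embedding gives for $\underline{\partial}_{j}\psi_{1}$ at best pointwise decay $O(r^{-\beta})$ with $\beta<\delta+1+\frac{n}{2}$; for this term times the surface measure $\sim r^{n-1}$ to vanish one would need $\delta+1+\frac{n}{2}>n-1$, i.e.\ $\delta>\frac{n}{2}-2$, which is precisely the opposite of the A.E.\ hypothesis $\delta<\frac{n}{2}-2$. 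This is the same borderline rate $r^{-(n-1)}$ as $\underline{\partial}\,\underline{h}$ itself, whose flux integral converges to the (generally nonzero) mass; so no term-by-term pointwise estimate can show that this contribution vanishes.

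The missing idea is structural, not analytic, and it is the heart of the paper's proof. Since $\gamma^{i}\gamma^{j}=\eta^{ij}+\tfrac{1}{2}[\gamma^{i},\gamma^{j}]$, one has $\eta^{ij}-\gamma^{i}\gamma^{j}=\tfrac{1}{2}[\gamma^{j},\gamma^{i}]$, which is antisymmetric in $(i,j)$. The dangerous term is then rewritten as
\[
\tilde{\psi}_{0}[\gamma^{j},\gamma^{i}]D_{j}\psi_{1}\equiv D_{j}\{\tilde{\psi}_{0}[\gamma^{j},\gamma^{i}]\psi_{1}\}-D_{j}\tilde{\psi}_{0}\,[\gamma^{j},\gamma^{i}]\psi_{1}.
\]
The first piece is the divergence of an antisymmetric $2$-tensor: its double divergence vanishes identically, so by the Stokes formula it contributes nothing to the limit of the flux integrals. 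The second piece contains $D_{j}\tilde{\psi}_{0}=-\tilde{\psi}_{0}\sigma_{j}$, and $\sigma_{j}\sim\underline{\partial}\,\underline{h}\in C_{n-1}^{0}$ decays like $r^{-(n-1)}$; paired with $\psi_{1}\in C_{\beta^{\prime}}^{0}$, $\beta^{\prime}>0$ (here the lower bound $\delta>-\frac{n}{2}$ is what is used), the product decays strictly faster than $r^{-(n-1)}$ and its surface integral does tend to zero. Your direct multiplication-and-embedding estimate is adequate only for the remaining term $\tilde{\psi}_{1}[\gamma^{j},\gamma^{i}]D_{j}\psi$, where both factors decay --- which is indeed how the paper treats it. Without the antisymmetry trick, Part~2 does not go through.
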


\begin{proof}
1. By elementary computation, using $D_{i}\tilde{\psi}=\widetilde{D_{i}\psi} $
and the identity 3.6 one finds
\begin{equation}
D_{i}\mathcal{U}^{i}\equiv|D\psi|^{2}-|\mathcal{D}\psi|^{2}+\frac{1}{4}%
R|\psi|^{2}%
\end{equation}
Therefore $D_{i}\mathcal{U}^{i}\geq0$ if $R\geq0$\ and $\psi$\ satisfies the
equation $\mathcal{D}\psi=0.$

2.\ To study the limit of the integral on $S_{r}^{n-1}$ of $\mathcal{U}^{i}$
when $\psi=\psi_{0}+\psi_{1}$ we write
\begin{equation}
\mathcal{U}^{i}=\mathcal{U}_{0}^{i}+\frac{1}{2}\mathcal{R}e\{\tilde{\psi}%
_{0}[\gamma^{j},\gamma^{i}]D_{j}\psi_{1})+\tilde{\psi}_{1}[\gamma^{j}%
,\gamma^{i}]D_{j}\psi)\}.
\end{equation}
Hence $D_{i}\mathcal{U}^{i}=D_{i}\mathcal{U}_{0}^{i}+D_{i}\mathcal{V}^{i},$%
\begin{equation}
\mathcal{V}^{i}\equiv\frac{1}{2}\mathcal{R}_{e}\{\tilde{\psi}_{0}[\gamma
^{j},\gamma^{i}]D_{j}\psi_{1}+\tilde{\psi}_{1}[\gamma^{j},\gamma^{i}]D_{j}%
\psi)\}
\end{equation}
Embedding and multiplication properties of Sobolev spaces give
\[
\tilde{\psi}_{1}[\gamma^{j},\gamma^{i}]D_{j}\psi)\in H_{s,\delta}%
\times\{C_{n-1}^{1}\cap H_{s-1,\delta+1}\}\subset H_{s-1,2\delta+1+\frac{n}%
{2}}\subset C_{\beta}^{0}%
\]%
\[
\beta<2\delta+1+\frac{n}{2}+\frac{n}{2}<2n-3.
\]
To estimate the other term one remarks that
\begin{equation}
D_{i}\{\tilde{\psi}_{0}[\gamma^{j},\gamma^{i}]D_{j}\psi_{1}\}\equiv D_{i}%
D_{j}\{\tilde{\psi}_{0}[\gamma^{j},\gamma^{i}]\psi_{1}\}-D_{i}\{D_{j}%
\tilde{\psi}_{0}[\gamma^{j},\gamma^{i}]\psi_{1}\}
\end{equation}
the first parenthesis is an antisymmetric 2-tensor hence its double divergence
$D_{i}D_{j}$ is identically zero. The second parenthesis is
\[
D_{j}\tilde{\psi}_{0}[\gamma^{j},\gamma^{i}]\psi_{1}\in C_{n-1}^{0}\times
H_{s,\delta}\subset C_{\beta}^{0}.
\]
The Stokes formula implies, with $M_{r}^{n}:=M^{n}-\{\Omega_{I}\cap\sum
(x^{i})^{2}\geq r^{2}\}$
\begin{equation}
\int_{M_{r}^{n}}D_{i}\mathcal{U}^{i}\mu_{g}=\int_{S_{r}^{n-1}}\mathcal{U}%
^{i}n_{i}\mu_{\bar{g}}=\int_{S_{r}^{n-1}}(\mathcal{U}_{0}^{i}+\mathcal{V}%
^{i})n_{i}\mu_{\bar{g}}.
\end{equation}
The fall off properties found for $\mathcal{V}^{i}$ complete the proof.
\end{proof}

\begin{lemma}
If $(M^{n},g)$ is A.E. $R\geq0$ and $\psi_{0}$ is a smooth spinor constant in
$\Omega_{I}$ and zero in the other ends there exists on $M^{n}$ a spinor
$\psi\equiv\psi_{0}+\psi_{1},$ such that\ $\mathcal{D}\psi=0,$ \ $\psi_{1}\in
H_{s,\delta}.$
\end{lemma}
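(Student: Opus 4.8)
The plan is to recast the problem as an inhomogeneous Dirac equation and then invoke the Fredholm theorem of Section 3. Writing $\psi=\psi_0+\psi_1$, the condition $\mathcal{D}\psi=0$ is equivalent to
\[
\mathcal{D}\psi_1=-\mathcal{D}\psi_0 .
\]
Thus it suffices to show that $-\mathcal{D}\psi_0$ is an admissible right-hand side, namely $\mathcal{D}\psi_0\in H_{s-1,\delta+1}$, and that $\mathcal{D}$ is injective on $H_{s,\delta}$; the Fredholm theorem then promotes $\mathcal{D}\colon H_{s,\delta}\to H_{s-1,\delta+1}$ to an isomorphism and furnishes the unique $\psi_1\in H_{s,\delta}$.

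First I would check the source term. Since $\psi_0$ is smooth, vanishes in the other ends, and satisfies $\underline{\partial}_i\psi_0=0$ in $\Omega_I$, the local expression 3.5 gives $\mathcal{D}\psi_0=\gamma^i\sigma_i\psi_0=\tfrac14\gamma^i\gamma^h\gamma^k\omega_{i,hk}\psi_0$ in $\Omega_I$, while outside $\Omega_I$ the field $\mathcal{D}\psi_0$ is smooth and supported in the compact part. By the connection estimate 2.7 one has $\omega\in H_{s-1,\delta+1}$, and since $\psi_0$ is bounded the multiplication properties of the weighted Sobolev spaces give $\mathcal{D}\psi_0\in H_{s-1,\delta+1}$.

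The heart of the argument is injectivity. Let $\phi\in H_{s,\delta}$ satisfy $\mathcal{D}\phi=0$. Specialising the divergence identity 4.4 (part 1 of the preceding Lemma) to $\phi$ and discarding the $|\mathcal{D}\phi|^2$ term yields
\[
D_i\mathcal{U}^i\equiv |D\phi|^2+\tfrac14 R|\phi|^2\geq 0 .
\]
I would integrate this over $M^n$ by the Stokes formula as in 4.8; because $\phi\in H_{s,\delta}$, the flux $\int_{S_r^{n-1}}\mathcal{U}^i n_i\mu_{\bar g}$ tends to zero with $r$, so $\int_{M^n}(|D\phi|^2+\tfrac14 R|\phi|^2)\mu_g=0$. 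As $R\geq 0$, both nonnegative integrands vanish, hence $D\phi\equiv 0$. Then $\underline{\partial}_i|\phi|^2=2\mathcal{R}e(\tilde{\phi}\, D_i\phi)=0$, so $|\phi|$ is constant on $M^n$; but $\phi\in H_{s,\delta}$ tends to zero at infinity, forcing $|\phi|\equiv 0$ and thus $\phi\equiv 0$. Therefore $\mathcal{D}$ is injective, and the Fredholm theorem completes the construction of $\psi=\psi_0+\psi_1$.

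The main obstacle I anticipate is the justification that the boundary flux in the Stokes step really vanishes across the whole admissible weight range. A direct count gives $\mathcal{U}^i=O(|\phi|\,|D\phi|)$, whose flux through $S_r^{n-1}$ scales like $r^{-2(\delta+1)}$, which decays only for $\delta>-1$; for the slower weights $-\tfrac{n}{2}<\delta\le -1$ one must first upgrade the decay of a kernel element, using the standard fact that the kernel of $\mathcal{D}$ is independent of $\delta$ between consecutive exceptional weights, so that $\phi$ may be taken in a weight where the integration by parts is licit. Once this decay bookkeeping is settled, every remaining step is a direct application of the quoted Fredholm theorem and of the Weitzenb\"ock identity 3.6.
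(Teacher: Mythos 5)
Your proposal is correct, and its skeleton is the same as the paper's: decompose $\psi=\psi_{0}+\psi_{1}$, verify $\mathcal{D}\psi_{0}\in H_{s-1,\delta+1}$, and apply Theorem 1 to solve $\mathcal{D}\psi_{1}=-\mathcal{D}\psi_{0}$. The difference is that the paper's proof consists of exactly those two sentences and stops, whereas you correctly observe that Theorem 1 yields an isomorphism only \emph{if} $\mathcal{D}$ is injective on $H_{s,\delta}$, and you prove that injectivity. This is precisely where the hypothesis $R\geq0$ enters (it plays no role in checking the source term), so your addition is not optional polish but the substantive core of the lemma, which the paper leaves implicit. Your injectivity argument --- integrate identity 4.4 for a kernel element $\phi$, use $\mathcal{D}\phi=0$ and $R\geq0$ to conclude $D\phi=0$, then note $|\phi|$ is constant and decays, hence vanishes --- is in substance the argument the paper carries out only later, for the modified operator $\not\nabla$ in Section 5, where it integrates identity 5.10 and finishes with the weighted Poincar\'e inequality instead of your constancy-plus-decay step; either conclusion is sound. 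Your closing caveat is also legitimate and applies equally to the paper itself: the flux of $\mathcal{U}^{i}$ for $\phi\in H_{s,\delta}$ scales like $r^{-2(\delta+1)}$, and since the admissible range $-\frac{n}{2}<\delta<\frac{n}{2}-2$ contains weights $\delta\leq-1$ for every $n\geq3$, the boundary term does not vanish by naive power counting alone; the paper faces the identical issue when it asserts that ``the fall off of $\psi_{1}$ implies that the divergence gives no contribution'' and likewise does not elaborate. Your proposed repair (pass to a non-exceptional weight $\delta'>-1$, where the kernel of $\mathcal{D}$ is unchanged, or equivalently upgrade the decay of kernel elements by elliptic theory) is standard and closes that gap.
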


\begin{proof}
The hypotheses made on $\psi_{0}$ show that $\mathcal{D}\psi_{0}\in
H_{s-1,\delta+1}.$ Theorem 1 implies the existence of $\psi_{1}.$
\end{proof}

The lemmas imply that $m\geq0$ if $R\geq0,$ that is if ($M^{n},g)$ is a
maximal submanifold of $\mathbf{(M}^{n+1}\mathbf{,g);}$ equivalently, if the
pointwise gravitational momentum $P$ on $M^{n}$ has a wanishing trace. We will
now lift this restriction, proving moreover that $m\geq|p|$.

\section{Positive energy.}

We define a real vector $\mathcal{P}$ on an A.E. $(M^{n},g,K)$
by\footnote{{\footnotesize Remark that we do not to introduce a matrix
}$\gamma_{0}$}
\[
\mathcal{P}^{i}:=\frac{1}{2}\tilde{\psi}\gamma_{h}P^{ih}\psi\equiv\frac{1}%
{2}\tilde{\psi}(\gamma_{h}K^{ih}-\gamma^{i}\gamma^{j}\gamma^{h}K_{jh}%
)\psi,\text{ \ }P^{ih}=K^{ih}-\delta^{ih}\mathrm{trK}%
\]
If $\psi_{0}$ is as before a smooth spinor constant in one end of $M^{n}$ and
zero in the other ends and $\psi$ is a spinor on $M^{n}$ such that $\psi
-\psi_{0}\in C_{\beta}^{0},$ $\beta>0,$ then
\begin{equation}
\lim_{r\rightarrow\infty}\int_{S_{r}^{n-1}}\mathcal{P}^{i}n_{i}\mu_{\bar{g}%
}=\frac{1}{2}\tilde{\psi}_{0}\gamma_{h}p^{h}\psi_{0},\text{ \ \ }p^{h}%
:=\lim_{r\rightarrow\infty}\int_{S_{r}^{n-1}}P^{ih}n_{i}\mu_{\bar{g}}.
\end{equation}
It is elementary to check using the properties of the $\gamma^{\prime}s$ that
$\gamma_{h}p^{h}$ is an hermitian matrix with eigenvalues $\pm|p|.$ If we
choose for $\psi_{0}$ an eigenvector of the eigenvalue $-|p|$ we then have
\begin{equation}
\tilde{\psi}_{0}\gamma_{h}p^{h}\psi_{0}=-|\psi_{0}|^{2}\text{\ }|p|.
\end{equation}

To estimate the limit 5.1. we use again the Stokes formula, with
\begin{equation}
D_{i}\mathcal{P}^{i}\equiv\frac{1}{2}D_{i}(\tilde{\psi}\gamma_{h}P^{ih}%
\psi)\equiv\frac{1}{2}D_{i}(\tilde{\psi}\gamma_{h}\psi\text{)}P^{ih}+\frac
{1}{2}\tilde{\psi}\gamma_{h}\psi D_{i}P^{ih}.
\end{equation}
The momentum constraint 0.2 gives
\[
D_{i}P^{ih}=-T_{0}^{h}.
\]
On the other hand, the identity 4.4 together with the Hamiltonian constraint
implies that
\begin{equation}
D_{i}\mathcal{U}^{i}\equiv|D\psi|^{2}-|\mathcal{D}\psi|^{2}+(\frac{1}{2}%
T_{00}+\frac{1}{4}|K|^{2}-\frac{1}{4}|\mathrm{tr}K|^{2})|\psi|^{2}.
\end{equation}
We introduce the notations\footnote{{\footnotesize Note that }$\not \nabla
${\footnotesize \ is a linear operator mapping space spinor into space
spinors, not the trace on }$M^{n}${\footnotesize \ of the covariant derivative
of a spacetime spinor.}}
\begin{equation}
\nabla_{i}\psi:=D_{i}\psi+\frac{1}{2}\gamma^{h}K_{ih}\psi,\text{
\ }\not \nabla\psi:=\gamma^{i}\nabla_{i}\equiv(\mathcal{D}+\frac{1}%
{2}\mathrm{tr}K)\psi.
\end{equation}
Elementary computation using $D_{i}\eta^{hj}\equiv0,$ $D_{i}\gamma^{h}\equiv0
$ gives
\begin{equation}
|\nabla\psi|^{2}:=\eta^{ij}\widetilde{\nabla_{i}\psi}\nabla_{j}\psi
\equiv|D\psi|^{2}+\frac{1}{2}D_{i}(\tilde{\psi}\gamma_{h}\psi)K^{ih}+\frac
{1}{4}|K|^{2}|\psi|^{2}%
\end{equation}
The identity 5.4 can therefore be written after simplification
\begin{equation}
D_{i}\mathcal{U}^{i}\equiv|\nabla\psi|^{2}-|\mathcal{D}\psi|^{2}+(\frac{1}%
{2}T_{00}-\frac{1}{4}|\mathrm{tr}K|^{2})|\psi|^{2}-\frac{1}{2}D_{i}%
(\tilde{\psi}\gamma_{h}\psi)K^{ih}).
\end{equation}
We deduce from the definition
\[
|\not \nabla\psi|^{2}\equiv|\mathcal{D}\psi|^{2}+\frac{1}{2}D_{i}%
(\widetilde{\psi}\gamma^{i}\psi)\mathrm{tr}K+\frac{1}{4}|\mathrm{tr}%
K|^{2}|\psi|^{2}%
\]
which gives
\begin{equation}
D_{i}\mathcal{U}^{i}\equiv|\nabla\psi|^{2}-|\not \nabla\psi|^{2}+\frac{1}%
{2}T_{00}|\psi|^{2}-\frac{1}{2}D_{i}(\tilde{\psi}\gamma^{h}\psi)P_{ih}.
\end{equation}

\begin{lemma}
If $(M^{n},g,K)$ is A.E. then it holds that
\begin{equation}
D_{i}(\mathcal{U}^{i}+\mathcal{P}^{i})\geq0
\end{equation}
if the dominant energy condition holds and $\psi$ satisfies the equation
$\not \nabla\psi=0.$
\end{lemma}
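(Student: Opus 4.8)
The plan is simply to add the two divergence identities already established. For $D_{i}\mathcal{U}^{i}$ I will use the last identity of this section,
\[
D_{i}\mathcal{U}^{i}\equiv|\nabla\psi|^{2}-|\not\nabla\psi|^{2}+\frac{1}{2}T_{00}|\psi|^{2}-\frac{1}{2}D_{i}(\tilde{\psi}\gamma^{h}\psi)P_{ih},
\]
and for $D_{i}\mathcal{P}^{i}$ the splitting 5.3 combined with the momentum constraint $D_{i}P^{ih}=-T_{0}^{h}$, which gives
\[
D_{i}\mathcal{P}^{i}\equiv\frac{1}{2}D_{i}(\tilde{\psi}\gamma_{h}\psi)P^{ih}-\frac{1}{2}\tilde{\psi}\gamma_{h}\psi\,T_{0}^{h}.
\]

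Adding these, the two terms carrying a derivative of $\tilde{\psi}\gamma_{h}\psi$ against $P$ cancel, because in the orthonormal coframe the Euclidean metric identifies $P_{ih}$ with $P^{ih}$ and $\gamma^{h}$ with $\gamma_{h}$, so that $-\frac{1}{2}D_{i}(\tilde{\psi}\gamma^{h}\psi)P_{ih}+\frac{1}{2}D_{i}(\tilde{\psi}\gamma_{h}\psi)P^{ih}=0$. Imposing the equation $\not\nabla\psi=0$ removes the term $|\not\nabla\psi|^{2}$, and one is left with
\[
D_{i}(\mathcal{U}^{i}+\mathcal{P}^{i})\equiv|\nabla\psi|^{2}+\frac{1}{2}(T_{00}|\psi|^{2}-\tilde{\psi}\gamma_{h}\psi\,T_{0}^{h}).
\]

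It then remains to check that the right-hand side is nonnegative. The first term $|\nabla\psi|^{2}=\eta^{ij}\widetilde{\nabla_{i}\psi}\nabla_{j}\psi$ is manifestly $\geq0$. For the matter term I will argue exactly as was done earlier for $\gamma_{h}p^{h}$: since the $\gamma_{h}$ are hermitian and $T_{0}^{h}$ is real, the matrix $\gamma_{h}T_{0}^{h}$ is hermitian, and the Clifford relations 3.1 give $(\gamma_{h}T_{0}^{h})^{2}=\{\sum_{h}(T_{0}^{h})^{2}\}I_{p}$, so its eigenvalues are $\pm|T_{0}|$ with $|T_{0}|:=\{\sum_{h}(T_{0}^{h})^{2}\}^{1/2}$. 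Hence $\tilde{\psi}\gamma_{h}\psi\,T_{0}^{h}=\tilde{\psi}(\gamma_{h}T_{0}^{h})\psi\leq|T_{0}|\,|\psi|^{2}$, and therefore $T_{00}|\psi|^{2}-\tilde{\psi}\gamma_{h}\psi\,T_{0}^{h}\geq(T_{00}-|T_{0}|)|\psi|^{2}$.

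Finally the dominant energy condition, applied to the unit normal $e_{0}$, says exactly that the energy density dominates the Euclidean norm of the momentum density, $T_{00}\geq|T_{0}|$, which yields $D_{i}(\mathcal{U}^{i}+\mathcal{P}^{i})\geq0$. The only step that is not pure bookkeeping is this last identification: I must verify that the abstract statement ``$u_{\alpha}T^{\alpha\beta}$ timelike for every timelike $u$'' reduces, in the adapted orthonormal frame, to $T_{00}\geq|T_{0}|$ with the \emph{same} Euclidean norm $|T_{0}|$ that appears as the largest eigenvalue of $\gamma_{h}T_{0}^{h}$. Everything else reduces to adding the final identity for $D_{i}\mathcal{U}^{i}$ and the expression 5.3 for $D_{i}\mathcal{P}^{i}$, and to the positivity of $|\nabla\psi|^{2}$.
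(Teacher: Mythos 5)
Your proposal is correct and follows essentially the same route as the paper: the paper likewise adds identity 5.3 (with the momentum constraint $D_{i}P^{ih}=-T_{0}^{h}$) to identity 5.8, lets the two $D_{i}(\tilde{\psi}\gamma_{h}\psi)P^{ih}$ terms cancel, and bounds $\mathcal{T}=\frac{1}{2}(T_{00}|\psi|^{2}-\tilde{\psi}\gamma^{h}\psi T_{0h})$ below by zero via $|\tilde{\psi}\gamma^{h}\psi T_{0h}|\leq|\psi|^{2}(\eta^{ih}T_{0i}T_{0h})^{1/2}\leq T_{00}|\psi|^{2}$ under the dominant energy condition. Your hermitian-eigenvalue argument for $\gamma_{h}T_{0}^{h}$ is just the explicit form of that same estimate (the one the paper invokes earlier for $\gamma_{h}p^{h}$), so there is no substantive difference.
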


\begin{proof}
The identities 5.3\ and 5.8 lead to
\begin{equation}
D_{i}(\mathcal{U}^{i}+\mathcal{P}^{i})\equiv|\nabla\psi|^{2}-|\not \nabla
\psi|^{2}+\mathcal{T},\text{ \ }\mathcal{T}:=\frac{1}{2}(T_{00}|\psi
|^{2}-\tilde{\psi}\gamma^{h}\psi T_{0h}).\text{\ }%
\end{equation}
with $\mathcal{T}\geq0$ under the dominant energy condition, because
\[
|\tilde{\psi}\gamma^{h}\psi T_{0h}|\equiv|\psi|^{2}(\eta^{ih}T_{0i}%
T_{0h})^{\frac{1}{2}}\leq T_{00}|\psi|^{2}|
\]
with $\mathcal{T}\geq0$ under the dominant energy condition, because
\[
|\tilde{\psi}\gamma^{h}\psi T_{0h}|\equiv|\psi|^{2}(\eta^{ih}T_{0i}%
T_{0h})^{\frac{1}{2}}\leq T_{00}|\psi|^{2}|.
\]
\end{proof}

\begin{lemma}
If $(M^{n},g,K)$ is A.E., then the equation $\not \nabla\psi=0$ has a solution
$\psi\equiv\psi_{0}+\psi_{1},$ $\psi_{0}$ smooth, constant in $\Omega_{I}$ and
zero in the other ends, and $\psi_{1}\in H_{s,\delta}.$
\end{lemma}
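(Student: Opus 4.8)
The plan is to follow the proof of Lemma 4 almost verbatim, with the modified operator $\not\nabla$ in place of $\mathcal{D}$ and the dominant energy condition in place of $R\geq0$. Writing $\psi=\psi_0+\psi_1$, the equation $\not\nabla\psi=0$ is equivalent to $\not\nabla\psi_1=-\not\nabla\psi_0$, so it suffices to produce $\psi_1\in H_{s,\delta}$ solving it. By definition 5.5 the operator is $\not\nabla=\mathcal{D}+\frac{1}{2}\mathrm{tr}K$, that is the Dirac operator plus the zeroth order term $f:=\frac{1}{2}\mathrm{tr}K$. Since $K\in H_{s-1,\delta+1}\cap C_{n-1}^{0}$ the trace $\mathrm{tr}K$ lies in the same space, and the multiplication and embedding properties of the weighted Sobolev spaces show that $\psi\mapsto f\psi$ is a bounded linear map from $H_{s,\delta}$ into $H_{s-1,\delta+1}$. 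Hence Theorem 1 applies: $\not\nabla$ is Fredholm from $H_{s,\delta}$ to $H_{s-1,\delta+1}$, and an isomorphism as soon as it is injective.

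I would first verify that the right hand side sits in the target space, i.e. $\not\nabla\psi_0\in H_{s-1,\delta+1}$, exactly as in Lemma 4. In $\Omega_I$ one has $\underline{\partial}_i\psi_0=0$, so $\mathcal{D}\psi_0=\gamma^i\sigma_i\psi_0$ with $\sigma_i=\frac{1}{4}\gamma^h\gamma^k\omega_{i,hk}$; the falloff 2.7 of the connection coefficients puts $\omega_{i,hk}$, hence $\sigma_i$, in $H_{s-1,\delta+1}$, so $\mathcal{D}\psi_0\in H_{s-1,\delta+1}$ there, while on $W$ and the other ends $\psi_0$ is smooth with the required support. Likewise $f\psi_0=\frac{1}{2}\mathrm{tr}K\,\psi_0\in H_{s-1,\delta+1}$, because $\psi_0$ is smooth and bounded and $\mathrm{tr}K$ lies in that space. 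Thus $\not\nabla\psi_0\in H_{s-1,\delta+1}$.

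The substantive step is injectivity, and it is here that the dominant energy condition is used essentially. Suppose $\phi\in H_{s,\delta}$ satisfies $\not\nabla\phi=0$. I would integrate the identity 5.10, $D_i(\mathcal{U}^i+\mathcal{P}^i)\equiv|\nabla\phi|^2-|\not\nabla\phi|^2+\mathcal{T}$, over $M^n$ and apply the Stokes formula as in the proof of Lemma 3. Because $\phi$ carries no constant part, both $\mathcal{U}^i$ and $\mathcal{P}^i$ are quadratic in the decaying fields $\phi$, $D\phi$ and $K$, and the boundary integral over $S_r^{n-1}$ tends to zero; with $\not\nabla\phi=0$ and $\mathcal{T}\geq0$ under the dominant energy condition (Lemma 5) this forces $\int_{M^n}(|\nabla\phi|^2+\mathcal{T})\mu_g=0$, hence $\nabla_i\phi=0$. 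So $\phi$ is parallel for the connection $\nabla$, and since $D_i\phi=-\frac{1}{2}\gamma^hK_{ih}\phi$ yields $|\underline{\partial}_i|\phi||\leq\frac{1}{2}|K|\,|\phi|$ with $|K|$ integrable along rays for $n>2$, a Gronwall estimate keeps $|\phi|$ comparable to its value at any finite radius; as $\phi$ decays this is possible only if $\phi\equiv0$. Thus $\not\nabla$ is injective, hence an isomorphism by Theorem 1, and the required $\psi_1\in H_{s,\delta}$ with $\not\nabla\psi_1=-\not\nabla\psi_0$ exists.

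I expect the main obstacle to be the rigorous justification of the integration by parts in the injectivity step: one must show that the kernel element $\phi$, although only assumed to lie in $H_{s,\delta}$ with $\delta$ possibly close to $-\frac{n}{2}$, decays fast enough for the boundary term on $S_r^{n-1}$ to vanish. This is where elliptic regularity for the homogeneous equation $\not\nabla\phi=0$ and the pointwise falloff $K\in C_{n-1}^{0}$ are invoked, so that $\phi$ inherits decay strictly better than the generic $H_{s,\delta}$ rate. Everything else is the algebra already carried out in Lemma 5 together with the direct appeal to Theorem 1.
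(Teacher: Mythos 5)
Your proposal is correct and, in its main structure, coincides with the paper's own proof: the same reduction of $\not\nabla\psi=0$ to $\not\nabla\psi_{1}=-\not\nabla\psi_{0}\in H_{s-1,\delta+1}$, the same appeal to Theorem 1 (you make explicit the clause about the bounded zeroth-order perturbation $f=\frac{1}{2}\mathrm{tr}K$, which the paper leaves implicit by noting only that $\not\nabla$ has the same principal part as $\mathcal{D}$), and the same injectivity argument obtained by integrating the identity 5.10 against a kernel element, discarding the boundary term, and using $\mathcal{T}\geq 0$ (the dominant energy condition, a standing hypothesis of the paper even though the lemma does not restate it). The one genuine divergence is the final step, from $\nabla_{i}\phi=0$ to $\phi=0$. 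The paper invokes the Poincar\'e inequality in weighted Hilbert spaces, applied to $D_{i}\phi=-\frac{1}{2}\gamma^{h}K_{ih}\phi$ with $\gamma^{h}K_{ih}\in H_{s-1,\delta+1}$ and $\phi\in H_{s,\delta}$, $-\frac{n}{2}<\delta<\frac{n}{2}-2$; you instead run a Gronwall (parallel-transport) estimate along rays, exploiting $|K|\lesssim r^{-(n-1)}$, integrable for $n>2$, to show that $|\phi|$ stays comparable to its value at finite radius, so that decay at infinity forces $\phi\equiv 0$ in the end, and then vanishing propagates everywhere by ODE uniqueness for the first-order equation. Both arguments are sound: yours is more elementary and geometric but leans on the pointwise fall-off $K\in C_{n-1}^{0}$, whereas the paper's stays entirely within the weighted $L^{2}$ framework. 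You are also right --- and more candid than the paper, which simply asserts that ``the fall off of $\psi_{1}$ implies that the divergence gives no contribution'' --- that justifying the vanishing of the boundary integral for a kernel element with $\delta$ close to $-\frac{n}{2}$ requires the improved decay that elliptic theory supplies for solutions of the homogeneous equation.
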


\begin{proof}
The operator $\not \nabla$ has the same principal part as $\mathcal{D}$,
therefore is also elliptic. It maps $H_{s,\delta}$ into $H_{s-1,\delta+1}.$
The equation $\not \nabla\psi_{1}=-\not \nabla\psi_{0}\in H_{s-1,\delta+1}$
has one and only one solution if $\not \nabla$ is injective on $H_{s,\delta}.$
To show injectivity\footnote{{\footnotesize See a simlar proof in Chrusciel
Krakow lecture notes.}} we remark that the identity 5.10 was established
without restriction on $\psi,$ starting from the definitions of \ $\mathcal{U}%
^{i}$ and $\mathcal{P}^{i}.$\ We make $\psi=\psi_{1}$ in 5.10 and integrate it
on $M^{n},$ the fall off of $\psi_{1}$ implies that the divergence gives no
contribution, the equation $|\not \nabla\psi_{1}|^{2}=0$ implies therefore
that on $M^{n},$ if $\mathcal{T}$\ $\geq0$%
\[
|\nabla\psi_{1}|^{2}=0,\text{\ \ i.e. \ \ }D_{i}\psi_{1}+\frac{1}{2}\gamma
^{h}K_{ih}\psi_{1}=0,\text{ \ with \ }\gamma^{h}K_{ih}\in H_{s-1,\delta+1}%
\]
The Poincar\'{e} inequality\footnote{{\footnotesize See for instance
Y.\ Choquet-Bruhat ''General Relativity and the Einstein equations'' appendix
3sobolev spaces'' p.541}} in weighted \ Hilbert spaces leads to $\psi_{1}=0$
if $\psi_{1}\in H_{s,\delta},$ $s>\frac{n}{2}+1$ and $-2+\frac{n}{2}%
>\delta>-\frac{n}{2}.$
\end{proof}

The lemmas, after choice of $\psi_{0}$ satisfying 5.2, prove the following theorem

\begin{theorem}
If an Einsteinian spacetime satisfies the dominant energy condition, the
energy momentum vector $E^{0}=m,$ $E^{i}=p^{i}$ of each end of an A.E. slice
($M^{n},g,K)$ satisfies the inequality
\[
m\geq|p|
\]
\end{theorem}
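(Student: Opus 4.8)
The plan is to assemble the preceding lemmas into a single application of Stokes' formula, exactly as in the maximal case of Section 4, but now using the modified divergence identity 5.10 which already incorporates the momentum. First I would fix an end $\Omega_I$ and choose the reference spinor $\psi_0$ with care: since $\gamma_h p^h$ is Hermitian with eigenvalues $\pm|p|$, I select $\psi_0$ to be smooth, constant in $\Omega_I$, vanishing in the other ends, normalized by $|\psi_0|=1$, and equal in $\Omega_I$ to a unit eigenvector of $\gamma_h p^h$ for the eigenvalue $-|p|$, so that 5.2 holds. This choice is precisely what will turn the generic momentum contribution into the sharp term $-\tfrac12|p|$; if instead one only wanted $m\ge 0$ any $\psi_0$ would do.

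Next I would invoke Lemma 6 to produce a solution $\psi=\psi_0+\psi_1$ of $\not\nabla\psi=0$ with $\psi_1\in H_{s,\delta}$. The parameter ranges $s>\tfrac n2+1$ and $\delta>-\tfrac n2$ give, by the weighted Sobolev embedding, $\psi_1\in C_\beta^0$ with $\beta>0$, which is exactly the falloff required to apply both the mass formula of Theorem 2 (via part 2 of Lemma 3) and the momentum limit 5.1 to this \emph{same} spinor $\psi$.

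I would then integrate the identity 5.10 over $M_r^n$ and invoke Stokes' formula in the form already used in Section 4, obtaining
\[
\int_{M_r^n}\left(|\nabla\psi|^2-|\not\nabla\psi|^2+\mathcal{T}\right)\mu_g=\int_{S_r^{n-1}}(\mathcal{U}^i+\mathcal{P}^i)n_i\mu_{\bar{g}}.
\]
Because $\not\nabla\psi=0$ and, by Lemma 5, $\mathcal{T}\geq 0$ under the dominant energy condition, the left integrand is nonnegative, so the left side is nonnegative and nondecreasing in $r$; its limit as $r\to\infty$ therefore exists in $[0,+\infty]$ and is $\geq 0$. On the right the boundary integral splits: by Theorem 2 and part 2 of Lemma 3 the $\mathcal{U}^i$ contribution tends to $\tfrac m2$, while by 5.1 together with the eigenvector choice 5.2 the $\mathcal{P}^i$ contribution tends to $\tfrac12\tilde\psi_0\gamma_h p^h\psi_0=-\tfrac12|p|$. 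Hence the boundary limit equals $\tfrac12\left(m-|p|\right)$, and its nonnegativity yields $m\geq|p|$.

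The main obstacle I anticipate is bookkeeping rather than conceptual: one must verify that the single spinor $\psi$ furnished by Lemma 6 simultaneously satisfies the falloff hypotheses of all three limit computations (Theorem 2, part 2 of Lemma 3, and 5.1), and that the passage to the limit in the Stokes identity is legitimate, i.e. that the interior integral genuinely converges to the boundary limit with no contribution escaping at infinity. Both points rest on the embedding $H_{s,\delta}\subset C_\beta^0$ with $\beta>0$ guaranteed by the stated parameter ranges, so no estimates beyond those already recorded in the lemmas should be needed.
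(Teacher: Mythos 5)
Your proposal is correct and follows exactly the route the paper intends: the paper's entire proof is the one-line remark that ``the lemmas, after choice of $\psi_{0}$ satisfying 5.2, prove the theorem,'' and your argument is precisely that assembly --- Lemma 5 plus Stokes applied to 5.10, Lemma 6 for existence of $\psi=\psi_{0}+\psi_{1}$ with $\not\nabla\psi=0$, Theorem 2 together with part 2 of Lemma 3 for the $\mathcal{U}^{i}$ boundary limit $\tfrac{m}{2}$, and 5.1 with the eigenvector choice 5.2 for the $\mathcal{P}^{i}$ limit $-\tfrac{1}{2}|p|$. The falloff bookkeeping you flag (the embedding $H_{s,\delta}\subset C_{\beta}^{0}$ with $\beta>0$, and the fact that part 2 of Lemma 3 needs only the decomposition $\psi=\psi_{0}+\psi_{1}$ and not the Dirac equation) is exactly what the paper leaves implicit, and it goes through under the stated ranges of $s$ and $\delta$.
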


\begin{acknowledgement}
I thank Thibault Damour for his encouragement to write this Note and for his
help in doing it. I thank Piotr Chrusciel for communicating to me a text of
his 2010 lectures in Krakow. A long list of references can be found there.
\end{acknowledgement}

16 Avenue d'Alembert, 92160, Antony

01 47 02 50 86, ycb@ihes.fr

Rubrique; Physique Math\'{e}matique
\end{document}